\newcolumntype{P}[1]{>{\centering\arraybackslash}p{#1}}
\newtheorem{definition}{Definition}
\newtheorem{theorem}{Theorem}
\newtheorem*{proposition*}{Proposition}
\newtheorem{remark}{Remark}
\begin{document}

\title{Wireless Federated Learning with Local Differential Privacy}

 \author{{Mohamed~Seif  \quad Ravi~Tandon \quad Ming~Li}\\
 Department of Electrical and Computer Engineering\\
 University of Arizona\\
 Email: \{\textit{mseif, tandonr, lim}\}@email.arizona.edu
 }

\maketitle

\begin{abstract}
In this paper, we study the problem of federated learning (FL) over a wireless channel, modeled by a Gaussian multiple access channel (MAC), subject to local differential privacy (LDP) constraints. We show that the superposition nature of the wireless channel provides a dual benefit of bandwidth efficient gradient aggregation, in conjunction with strong LDP guarantees for the users. We propose a private wireless gradient aggregation scheme, which shows that when aggregating gradients from $K$ users, the privacy leakage per user scales as $\mathcal{O}\big(\frac{1}{\sqrt{K}} \big)$ compared to orthogonal transmission in which the privacy leakage scales as a constant. We also present analysis for the convergence rate of the proposed private FL aggregation algorithm and study the tradeoffs between wireless resources, convergence, and privacy.
\end{abstract}




\vspace{-4pt}
\section{Introduction}
\label{sec:introduction}
\footnote{This work was supported by US NSF through grants CAREER 1651492, CNS 1715947, and by the Keysight Early Career Professor Award.}

 Federated learning (FL) \cite{mcmahan2017communication} is a framework that enables multiple users to jointly train a learning model. In prototypical FL, a central server interacts with multiple users to train a ML model in an iterative manner as follows: users compute gradients for the ML model on their local data sets, and gradients are subsequently exchanged for model updates. There are several motivating factors behind the surging popularity of FL: a) centralized approaches can be inefficient in terms of storage/computation, and FL provides natural parallelization for training, and can leverage increasing computational power of devices and b) local data at each user is never shared, but only gradient computations from each user are collected. Despite the fact that in F-ML, local data is never shared by a user, even exchanging gradients in a raw form can leak information, as shown in recent works \cite{shokri2017membership, hayes2019logan, melis2019exploiting}.
 
 
 Motivated by these factors, there has been a recent surge in designing F-ML algorithms with rigorous privacy guarantees. Differential privacy (DP) \cite{Dwork20061} has been adopted a \textit{de facto} standard notion for private data analysis and aggregation. Within the context of FL, the notion of local differential privacy (LDP) is more suitable in which a  user can locally perturb and disclose the data to an \textit{untrusted} data curator/aggregator \cite{joseph2018local}. LDP has been already adopted and used in current applications, including Google's RAPPOR \cite{fanti2016building} for website browsing history aggregation, and by Microsoft for privately collecting telemetry data \cite{ding2017collecting}.  In the literature, there has been several research efforts to design FL algorithms satisfying LDP \cite{triastcyn2019federated, geyer2017differentially, bagdasaryan2019differential, wu2019distributed, choudhury2019differential, wei2019performance, agarwal2018cpsgd}. 
 While LDP provides stronger privacy guarantees (compared to a centralized solution), this comes at the cost of lower utility. In particular, to achieve the same level of privacy attained by a centralized solution, significant higher amount of noise/perturbation
  is needed \cite{cormode2018privacy, wang2018empirical, bassily2019linear, bassily2015local, bassily2017practical}.

Another parallel recent trend is to study the feasibility of FL over wireless channels. As the prototypical computation for FL training involves gradient aggregation from multiple users, the superposition property of the wireless channel can naturally support this operation much more efficiently. This has led to several recent works \cite{amiri2019machine, zhu2018low, zeng2019energy, yang2018federated, wang2019adaptive, amiri2019federated, sery2019analog, sery2019sequential, abad2019hierarchical, khan2019federated, amiri2019over} under the umbrella of FL at the wireless edge, where distributed users interact with a parameter server (PS) over a shared wireless medium for training ML models. Several methodologies have been proposed to study wireless FL, which can be broadly categorized into either digital or analog aggregation schemes. In digital schemes, quantized gradients from each user are individually transmitted to the PS using orthogonal transmission. For analog schemes, on the other hand, the gradient computations are rescaled and transmitted directly over the air by all users simultaneously. The superposition nature of the wireless medium makes analog schemes more bandwidth efficient compared to digital ones.   

In this paper, we focus on the following question: \textit{Can the superposition property of wireless also be beneficial for privacy? If yes, how can we optimally utilize the wireless resources, and what are the tradeoffs between convergence of F-ML training, wireless resources and privacy?}

\noindent \textbf{Main Contributions}:  In this paper, we consider the problem of FL training over a flat-fading Gaussian multiple access channel (MAC), subject to LDP constraints. We propose and study analog aggregation schemes, in which each user transmits a linear combination of a) local gradients and b) artificial Gaussian noise, subject to power constraints. The local gradients are processed as a function of the channel gains to \textit{align} the resulting gradients at the PS, whereas the artificial noise parameters are selected to satisfy the privacy constraints.  We show that  the privacy level per user scales as $\mathcal{O}\big(\frac{1}{\sqrt{K}} \big)$ compared to orthogonal transmission in which the privacy leakage scales as a constant.
We also provide the privacy-convergence trade-offs for smooth and convex loss functions through convergence analysis of the distributed gradient descent algorithm. We show that the training error decreases as the number of users increases and converges to the centralized algorithm where all points are available at the PS. To the best of our knowledge, this is the first result on wireless FL with LDP constraints. 

\section{System Model \& Problem Statement}\label{system_model_section}

  \begin{figure}[t]
        \centering
        \includegraphics[width=0.55\linewidth]{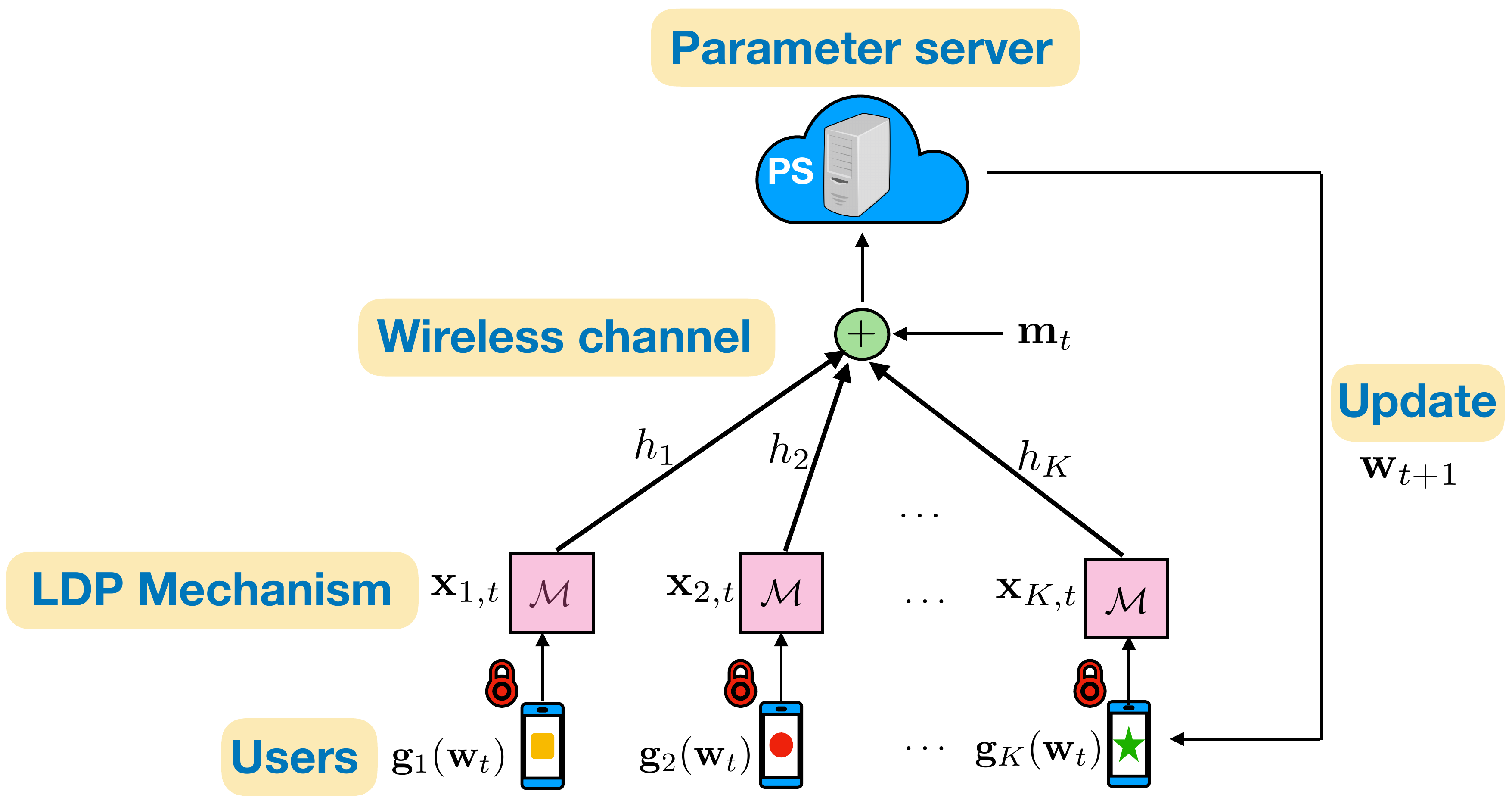}
        \caption{Illustration of the private wireless FL framework: Users collaborate with the PS to jointly train a machine learning model over a Gaussian MAC. The interaction between the users and the PS must satisfy local differential privacy (LDP) constraints for each user.  }
        \label{fig:my_label}
        \vspace{-15pt}
    \end{figure}
    
\noindent \textit{Wireless Channel Model}: We consider a single-antenna wireless FL system with $K$ users and a central PS  as shown in Fig.  \ref{fig:my_label}. The input-output relationship at time $i$ is
\begin{align}\label{eq:systemmodel}
      y(i) =  \sum_{k=1}^{K} h_{k} x_{k}(i)  + {m}(i),
\end{align}
where $x_{k}(i)$ is the  signal transmitted by  user $k$ at time $i$, and $y(i)$ is the received signal at the PS. Here, $h_{k} = |h_{k}| e^{j \phi_{k}}$ is the complex valued channel coefficient between the $k$-th user and the PS, and and $m(i)$ is the independent additive zero-mean unit-variance (AWGN) Gaussian noise.
The channel coefficients are assume to be time invariant, and each user can transmit subject to maximum power constraint of $P_k$. Each user is assumed to know its local channel gains, whereas we assume that the PS has global channel state information. 


\noindent \textit{Federated Learning Problem}:
Each user $k$ has a private local dataset $\mathcal{D}_{k}$ of size $|\mathcal{D}_{k}| $ data points, denoted as $\mathcal{D}_{k} = \{(\mathbf{u}_{i}^{(k)}, v_{i}^{(k)})\}_{i=1}^{{|\mathcal{D}_{k}|}}$, where $\mathbf{u}_{i}^{(k)}$ is the $i$-th data point and $v_{i}^{(k)}$ is the corresponding label at user $k$. Users communicate with the PS through the Gaussian MAC described above in order to  train a model by minimizing the loss function $F(\mathbf{w})$, i.e., 
\begin{align}
    \mathbf{w}^{*} = \text{arg} \min_{\mathbf{w}} F(\mathbf{w}) \triangleq \frac{1}{|\mathcal{D}_{\textsf{total}}|} \sum_{k=1}^{K} \sum_{i = 1}^{|\mathcal{D}_{k}|  } f_{k}( (\mathbf{u}_{i}^{(k)}, v_{i}^{(k)}); \mathbf{w}), \nonumber 
\end{align}
where $\mathbf{w} \in \mathds{R}^{d}$ is the parameter vector to be optimized, $f_{k}(\cdot)$ is the loss function for user $k$, and $\mathcal{D}_{\textsf{total}} = \cup_{k=1}^{K}  \mathcal{D}_{k}$ denotes the entire dataset used for  training. The minimization of $F(\mathbf{w})$ is carried out iteratively through a distributed gradient descent (GD) algorithm. More specifically, in the $t$-th training iteration, the PS broadcasts the global parameter vector $\mathbf{w}_{t}$ from the last iteration to all users. Each user $k$  computes his local gradient over the local $|\mathcal{D}_{k}|$ data points, i.e., $\mathbf{g}_{k}(\mathbf{w}_{t}) = \frac{1}{|\mathcal{D}_{k}|} \sum_{i= 1}^{|\mathcal{D}_{k}|}  \nabla f_{k}((\mathbf{u}_{i}^{(k)}, v_{i}^{(k)}); \mathbf{w})$ and sends back the computed gradient to the PS. For the scope of this paper, we assume that $|\mathcal{D}_{k}| = |\mathcal{D}|$, therefore $ |\mathcal{D}_{\textsf{total}}| = K |\mathcal{D}|$. The global parameter $\mathbf{w}_{t}$ is updated according to
\begin{align}
    \mathbf{w}_{t+1} = \mathbf{w}_{t} - \eta_{t} \frac{1}{K} \sum_{k=1}^{K} \mathbf{g}_{k}(\mathbf{w}_{t}),
\end{align}
where $\eta_{t}$ is the learning rate of the distributed GD algorithm at iteration $t$. The iteration process continues until convergence.

In addition, the gradient descent (GD) algorithm for wireless FL should also satisfy local differential privacy (LDP) constraints for each user, as defined next. 
\begin{definition} ($(\epsilon, \delta)$-LDP \cite{dwork2014algorithmic}) A randomized mechanism $\mathcal{M}: \mathcal{X} \rightarrow \mathds{R}^{d}$ is $(\epsilon, \delta)$-LDP if for any pair $x, x' \in \mathcal{X}$ and any measurable subset $\mathcal{O} \subseteq \text{Range}(\mathcal{M})$, we have 
\begin{align}
    \operatorname{Pr}(\mathcal{M}(x) \in \mathcal{O}) \leq e^{\epsilon}  \operatorname{Pr}(\mathcal{M}(x') \in \mathcal{O}) + \delta.
\end{align}
The case of $\delta = 0$ is called pure $\epsilon$-LDP.
\end{definition}
\noindent \textbf{Problem Statement.} The main goal of this paper is to explore the benefits of wireless gradient aggregation for privacy in FL. In addition, we investigate tradeoffs between the convergence rate of GD, wireless channel conditions and resources (such as power, SNR), subject to the privacy budgets of the users.

\section{Main Results \& Discussions}
In this Section, we present a general gradient aggregation scheme for wireless FL, where each user transmits a linear combination of its local gradients and artificial noise. We then specialize this scheme in which the part of transmission containing gradients are designed in a manner so that this component is aligned at the PS. We 
analyze this scheme and obtain the privacy leakage under LDP for each user, as a function of the wireless channel conditions, and the transmission parameters. Finally, we present the convergence rate of the private FL algorithm, and maximize the convergence rate by optimizing the local perturbations of each user for privacy.
\subsection{FL Transmission Scheme over Gaussian MAC}
\vspace{-5pt}
The overall FL scheme consists of $T$ training iterations, where each iteration comprises of $d$ uses of the wireless channel described in \eqref{eq:systemmodel}. At each iteration $t$, each user $k$ transmits the computed gradient vector $\mathbf{g}_{k}(\mathbf{w}_{t}) \in \mathds{R}^{d}$ together with additive Gaussian noise for privacy. In particular, the transmitted signal of user $k$ at iteration $t$ is given as:
\begin{align}
    \mathbf{x}_{k, t} & =  e^{-j \phi_{k}} \left[\underbrace{\frac{\sqrt{\alpha_{k} P_{k}}}{L} {\mathbf{g}_{k}(\mathbf{w}_{t}) }}_{\textsf{local gradient estimate}} + \underbrace{\sqrt{\beta_{k} P_{k}} \mathbf{n}_{k,t}}_{\textsf{local perturbation}} \right] \label{eq:inputsignal}
\end{align}
Here, each user $k$ performs local phase correction (i.e., input is multiplied by $e^{-j\phi_k}$) so that the received channel coefficient is non-negative, i.e., $|h_{k}|$. We assume that the gradient vectors have a bounded norm, i.e., $\|\mathbf{g}_{k}(\mathbf{w}_{t})\|_{2} \leq L, \forall k$, and normalize the gradient vector by $L$.  Here, $\alpha_k \in [0,1]$ denotes the fraction of power dedicated to the gradient vector $\mathbf{g}_{k}(\mathbf{w}_{t})$, whereas $\beta_k \in [0,1-\alpha_k]$ is the fraction of power dedicated to artificial Gaussian noise $\mathbf{n}_{k, t}$, whose elements are i.i.d., and drawn from $\mathcal{N}(0,1)$. These parameters satisfy $\alpha_k + \beta_k \leq 1$ so that the maximum power constraint of $P_k$ is satisfied. From \eqref{eq:systemmodel} and \eqref{eq:inputsignal}, the received signal at the PS can be written as:
\begin{align}
    \mathbf{y}_{t} & = \sum_{k=1}^{K} |h_{k}| \left[ \frac{\sqrt{\alpha_{k} P_{k}}}{L} {\mathbf{g}_{k}(\mathbf{w}_{t}) } + \sqrt{\beta_{k} P_{k}} \mathbf{n}_{k,t}  \right] + \mathbf{m}_{t} \nonumber \\
    & =  \underbrace{\sum_{k=1}^{K} |h_{k}|  \frac{\sqrt{\alpha_{k} P_{k}}}{L} {\mathbf{g}_{k}(\mathbf{w}_{t}) }}_{\textsf{aggregated gradient at PS}} +  \underbrace{\sum_{k=1}^{K} |h_{k}|  \sqrt{\beta_{k} P_{k}} \mathbf{n}_{k,t} + \mathbf{m}_{t}}_{\textsf{aggregated noise at PS}},\label{eq:output}
    \end{align}
 where  $\mathbf{m}_{t} \in \mathds{R}^{d}$ is the independent Gaussian noise, whose elements are i.i.d. drawn from $\mathcal{N}(0, \sigma_{m}^{2})$.
 In order to carry out the summation of the local gradients over-the-air, and receive an unbiased estimate of the true aggregated gradient, all users pick the coefficients $\alpha_k$s in order to align their transmitted local gradient estimates. Specifically, user $k$ picks $\alpha_k$ so that 
\begin{align}
       \frac{|h_{k}| \sqrt{\alpha_{k} P_{k}}}{L}  = c, \forall k,\label{eq:alpha11}
\end{align}
where $c$ is a constant. From \eqref{eq:alpha11}, we obtain $\alpha_{k} = \frac{c^{2} L^{2}}{|h_{k}|^{2} P_{k}}$, and using the fact that $\alpha_k \leq 1$, for all $k$, we can upper bound the constant $c$ as follows:
    $c \leq \frac{\sqrt{\min_{j} |h_{j}|^{2} P_{j}}}{L}$.
To maximize the signal power of the aligned gradient, we choose $c$ to match this upper bound, i.e., 
\begin{align}
c = \frac{\sqrt{\min_{j} |h_{j}|^{2} P_{j}}}{L}.\label{eq:cvalue}
\end{align}
Plugging this back in \eqref{eq:alpha11}, we obtain the choice of $\alpha_{k}$ as 
\begin{align}
    \alpha_{k} = \frac{ \min_{j} |h_{j}|^{2} P_{j} }{ |h_{k}|^{2} P_{k} }.
\end{align}
The above choice shows that alignment of gradients is effectively limited by the user with the worst effective SNR, i.e., $\min_{j} |h_{j}|^{2} P_{j}$. For the alignment scheme described above, the received signal by the PS in iteration $t$ in \eqref{eq:output} simplifies to:
\begin{align}
   \mathbf{y}_{t} = c \sum_{k=1}^{K} {\mathbf{g}_{k}(\mathbf{w}_{t})} +  \sum_{k=1}^{K} |h_{k}|  \sqrt{\beta_{k} P_{k}} \mathbf{n}_{k,t} + \mathbf{m}_{t}. \label{eq:ytbeforepreprocessing}
\end{align}
The PS  subsequently performs post-processing on $\mathbf{y}_{t}$ as follows:
\begin{align}
    &\hat{\mathbf{g}}_{t}  = \frac{1}{K c} \times \mathbf{y}_{t} \nonumber \\ 
    &=  \underbrace{\frac{1}{K} \sum_{k=1}^{K} {\mathbf{g}_{k}(\mathbf{w}_{t})}}_{\nabla F(\mathbf{w}_{t})} + \underbrace{ \frac{1}{Kc} \times  \left[ \sum_{k=1}^{K} |h_{k}|  \sqrt{\beta_{k} P_{k}} \mathbf{n}_{k,t} +  \mathbf{m}_{t} \right]}_{\mathbf{z}_{t}}, \label{eq:postpreprocessing}
\end{align}
where  $\mathbf{z}_{t} \sim \mathcal{N}(0,  \sigma_{z}^{2} \mathbf{I}_{d})$ is the effective noise at the PS, and $\sigma_{z}^{2} = \frac{1}{K^{2} c^{2}} \left[ \sum_{k=1}^{K} |h_{k}|^{2} \beta_{k} P_{k} + \sigma_{m}^{2}\right]$. Thus, we can write $\hat{\mathbf{g}}_{t} = \nabla F(\mathbf{w}_{t}) + \mathbf{z}_{t}$. As $\mathbf{z}_{t}$ is zero mean, $\hat{\mathbf{g}}_{t}$ is an unbiased estimate of $\nabla F(\mathbf{w}_{t})$, with variance of $\hat{\mathbf{g}}_{t}$ being equal to $\sigma_{z}^{2}$.

\subsection{Local Differential Privacy Analysis}
We next analyze the privacy level achieved by the transmission scheme for each user, as per the definition of LDP. Recall, that the local perturbation noise is drawn from Gaussian distribution. This well-known technique is known as Gaussian mechanism and can provide rigorous privacy guarantees based on LDP, as defined next.

\begin{definition} (Gaussian Mechanism - Appendix A of \cite{dwork2014algorithmic}) Suppose a user wants to release a function $f(X)$ of an input $X$ subject to $(\epsilon, \delta)$-LDP. The Gaussian release mechanism is defined as:
\begin{align}
M(X) \triangleq f(X) + \mathcal{N}(0, \sigma^{2} \mathbf{I}).
\end{align}
If the sensitivity of the function is bounded by $\Delta_f$, i.e., $\| f(x) - f(x')\|_{2}\leq \Delta_f$, $\forall x, x'$, then for any $\delta \in (0,1]$, Gaussian mechanism satisfies $(\epsilon, \delta)$-LDP, where 
\begin{align}
    \epsilon = \frac{\Delta_{f}}{\sigma} \sqrt{2 \log \frac{1.25}{\delta}}. \label{gaussian_noise_add}
\end{align}
\end{definition} 
In the next Theorem, we make use of the above result, and present the per-user privacy achieved by the proposed wireless FL scheme as a function of the noise power allocation parameters $\{\beta_k\}_{k=1}^{K}$, transmit powers $\{P_k\}_{k=1}^{K}$, and the channel coefficients $\{h_k\}_{k=1}^{K}$. 

\begin{theorem} 
For each user $k$, the proposed transmission scheme achieves $(\epsilon_{k}, \delta)$-LDP  per iteration, where
\begin{align}
    \epsilon_{k} & =  \frac{2  \sqrt{\min_{j} |h_{j}|^{2} P_{j}}}{\sqrt{\sum_{k=1}^{K} |h_{k}|^{2} \beta_{k} P_{k} + \sigma_{m}^{2}}}  \sqrt{2 \log \frac{1.25}{\delta} }.
\end{align}\label{theorem1:privacy}
\end{theorem}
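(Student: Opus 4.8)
The plan is to reduce the per-iteration privacy guarantee, as seen by an untrusted PS, to a single invocation of the Gaussian mechanism (Definition~2). Fix an iteration $t$ and a user $k$. First I would note that the only object visible to the PS that depends on user $k$'s local dataset $\mathcal{D}_k$ is the received vector $\mathbf{y}_t$ in \eqref{eq:ytbeforepreprocessing}: the post-processed gradient $\hat{\mathbf{g}}_t$, the updated model $\mathbf{w}_{t+1}$, and anything broadcast afterward are deterministic functions of $\mathbf{y}_t$ and of quantities independent of $\mathcal{D}_k$. By the post-processing invariance of differential privacy, it therefore suffices to bound the leakage of the map $\mathcal{M}_k\colon\mathcal{D}_k\mapsto\mathbf{y}_t$. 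Following the standard worst-case convention for LDP, I would treat the data, and hence the gradients $\mathbf{g}_j(\mathbf{w}_t)$ for $j\ne k$, of all other users as arbitrary but fixed, while keeping the locally drawn perturbations $\{\mathbf{n}_{j,t}\}_{j=1}^{K}$ and the channel noise $\mathbf{m}_t$ random and independent of $\mathcal{D}_k$; throughout, all probabilities are conditioned on the current model $\mathbf{w}_t$.

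Next I would rewrite \eqref{eq:ytbeforepreprocessing} as $\mathbf{y}_t = c\,\mathbf{g}_k(\mathbf{w}_t) + \mathbf{b}_k + \mathbf{v}_t$, where $\mathbf{b}_k \triangleq c\sum_{j\ne k}\mathbf{g}_j(\mathbf{w}_t)$ is a fixed offset (under the worst-case assumption) that shifts both hypotheses equally and therefore does not affect the privacy ratio, and $\mathbf{v}_t \triangleq \sum_{j=1}^{K}|h_j|\sqrt{\beta_j P_j}\,\mathbf{n}_{j,t} + \mathbf{m}_t$. Since the $\mathbf{n}_{j,t}$ have i.i.d.\ $\mathcal{N}(0,1)$ entries, $\mathbf{m}_t$ has i.i.d.\ $\mathcal{N}(0,\sigma_m^2)$ entries, and all of these are mutually independent, $\mathbf{v}_t$ is zero-mean Gaussian with covariance $\sigma^2\mathbf{I}_d$, where $\sigma^2 = \sum_{j=1}^{K}|h_j|^2\beta_j P_j + \sigma_m^2$. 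Hence $\mathcal{M}_k$ is, up to the irrelevant constant $\mathbf{b}_k$, exactly a Gaussian release mechanism for the statistic $f(\mathcal{D}_k)=c\,\mathbf{g}_k(\mathbf{w}_t)$ with noise standard deviation $\sigma$.

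It then remains to compute the $\ell_2$-sensitivity of $f$. Writing $\mathbf{g}_k$ and $\mathbf{g}_k'$ for the local gradients produced on two datasets $x$ and $x'$, the triangle inequality and the assumed bound $\|\mathbf{g}_k(\mathbf{w}_t)\|_2\le L$ give $\|f(x)-f(x')\|_2 = c\,\|\mathbf{g}_k-\mathbf{g}_k'\|_2 \le 2cL$, so $\Delta_f = 2cL$; substituting $c = \sqrt{\min_j|h_j|^2P_j}/L$ from \eqref{eq:cvalue} gives $\Delta_f = 2\sqrt{\min_j|h_j|^2P_j}$. Plugging $\Delta_f$ and $\sigma$ into the Gaussian-mechanism bound \eqref{gaussian_noise_add} yields $(\epsilon_k,\delta)$-LDP with $\epsilon_k = (\Delta_f/\sigma)\sqrt{2\log(1.25/\delta)}$, which is exactly the claimed expression. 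I expect the one delicate point to be the modeling step rather than any calculation: one must carefully separate what the adversary may know (the other users' data, hence $\mathbf{b}_k$) from what it may not (all of the locally generated noise, hence the full $K$-term sum inside $\sigma^2$), since it is precisely this aggregate perturbation that produces the $\mathcal{O}(1/\sqrt{K})$ improvement over orthogonal transmission highlighted in the abstract.
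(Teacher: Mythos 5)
Your proposal is correct and follows essentially the same route as the paper's proof: both identify $\mathbf{y}_t$ as a Gaussian mechanism applied to $c\sum_k \mathbf{g}_k(\mathbf{w}_t)$ with effective noise variance $\sigma^2=\sum_{k}|h_k|^2\beta_k P_k+\sigma_m^2$, bound the per-user sensitivity by $2cL=2\sqrt{\min_j|h_j|^2P_j}$ while fixing the other users' data, and substitute into \eqref{gaussian_noise_add}. Your additional care about post-processing invariance and the adversary's knowledge makes the threat model more explicit than the paper's terser argument, but the substance is identical.
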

\begin{proof}
The final received signal at the PS from \eqref{eq:ytbeforepreprocessing} can be expressed as:
$    \mathbf{y}_{t} = c \sum_{k=1}^{K} {\mathbf{g}_{k}(\mathbf{w}_{t})} + K c \mathbf{z}_t$.
We first observe that the variance of the effective Gaussian noise, i.e., variance of $K c \mathbf{z}_t$ is 
$\sigma^{2} =  \sum_{k=1}^{K} |h_{k}|^{2} \beta_{k} P_{k} $  
$ + \sigma_{m}^{2}$. In order to invoke the result of the Gaussian mechanism, we next obtain a bound on the sensitivity for user $k$. To bound the local sensitivity of $c \sum_{k=1}^{K} {\mathbf{g}_{k}(\mathbf{w}_{t})}$, consider any two different local datasets $\mathcal{D}_{k}$ and $\mathcal{D}'_{k}$ at user $k$, while fixing the datasets (and thus the gradients) of the remaining $(K-1)$ users. The local sensitivity of user $k$ can then be bounded as
\begin{align}
    \Delta_{k} 
    & = \max_{\mathcal{D}_{k}, \mathcal{D}'_{k}}
||  \mathbf{y}_t - \mathbf{y}^{'}_t ||_2 = \max_{\mathcal{D}_{k}, \mathcal{D}'_{k}}
||  c (\mathbf{g}_{k}(\mathbf{w}_{t}) - \mathbf{g}'_{k}(\mathbf{w}_{t})) ||_2 \nonumber\\
& \leq  c\max_{\mathcal{D}_{k}, \mathcal{D}'_{k}}
||  \mathbf{g}_{k}(\mathbf{w}_{t})||_2 + ||\mathbf{g}'_{k}(\mathbf{w}_{t})||_2 \overset{(a)} \leq 2cL\nonumber\\ & \overset{(b)} = 2  \sqrt{\min_{j} |h_{j}|^{2} P_{j}}, \label{eq:sensitivity}
\end{align}
where in step (a), we used the fact that $\| {\mathbf{g}_{k}(\mathbf{w}_{t})} \|_{2}   \leq L, \forall k$, and (b) follows from \eqref{eq:cvalue}. Hence, using the sensitivity bound in \eqref{eq:sensitivity} together with the variance $\sigma^{2} =  {\sum_{k=1}^{K} |h_{k}|^{2} \beta_{k} P_{k} + \sigma_{m}^{2}}$ in \eqref{gaussian_noise_add}, we arrive 
at the proof of Theorem \ref{theorem1:privacy}. 





\end{proof}


\begin{remark} From Theorem \ref{theorem1:privacy}, we can observe the privacy benefits of wireless gradient aggregation. 
We can further upper bound the achievable $\epsilon_k$ in Theorem \ref{theorem1:privacy} as follows:
\begin{align}
        \epsilon_{k} 
        & =  \frac{2  \sqrt{\min_{j} |h_{j}|^{2} P_{j}}}{\sqrt{\sum_{k=1}^{K} |h_{k}|^{2} \beta_{k} P_{k} + \sigma_{m}^{2}}}  \sqrt{2 \log \frac{1.25}{\delta} } \nonumber \\
        & \leq \frac{2  \sqrt{\min_{j} |h_{j}|^{2} P_{j}}}{\sqrt{\sum_{k=1}^{K} |h_{k}|^{2} \beta_{k} P_{k} }}  \sqrt{2 \log \frac{1.25}{\delta} } \nonumber\\
        &\leq  \frac{1}{\sqrt{K}} \times  \frac{2  \sqrt{\min_{j} |h_{j}|^{2} P_{j}}}{\sqrt{ \min_{k}|h_{k}|^{2} \beta_{k} P_{k} }}  \sqrt{2 \log \frac{1.25}{\delta} }, \nonumber
\end{align}
which shows that asymptotically, the per-user privacy level behaves like $\mathcal{O}(1/\sqrt{K})$. In contrast, privacy achieved by
orthogonal transmission can be shown to be:
\begin{align}
    \epsilon^{\text{Orthogonal}}_{k}  =   \frac{2|h_k| \sqrt{\alpha_{k}P_k}}{\sqrt{|h_k|^{2}\beta_{k}P_k + \sigma_{m}^{2}}}  \sqrt{2 \log \frac{1.25}{\delta} },  
\end{align}
which scales as a constant, and does not decay with $K$. 
\end{remark}


\begin{remark} While Theorem \ref{theorem1:privacy} shows the per-iteration leakage, we can use advanced composition results for LDP using the Gaussian mechanism to obtain the total privacy leakage when the wireless FL algorithm is used for $T$ iterations. Using existing results in \cite{dwork2010boosting}, it can be readily shown that the total leakage over $T$ iterations (per-user) of the proposed scheme is $(\epsilon_{k}^{(T)}, T \delta + \delta^{'})$-LDP for $\delta^{'} \in (0, 1]$ where,  
\begin{align}
    \epsilon_k^{(T)}= \sqrt{2 T \log(1/\delta^{'})} \epsilon_{k} + T \epsilon_{k} (e^{\epsilon_{k}} -1).
\end{align}
We illustrate the total per-user privacy leakage as a function of $K$, the number of users in Fig. \ref{fig:privacyscaling} for various values of $T$. As is clearly evident, the leakage provided by wireless FL goes asymptotically to $0$ as $K\rightarrow \infty$. 
\end{remark}

\begin{figure}[t]
    \centering
    \includegraphics[width=0.55\linewidth]{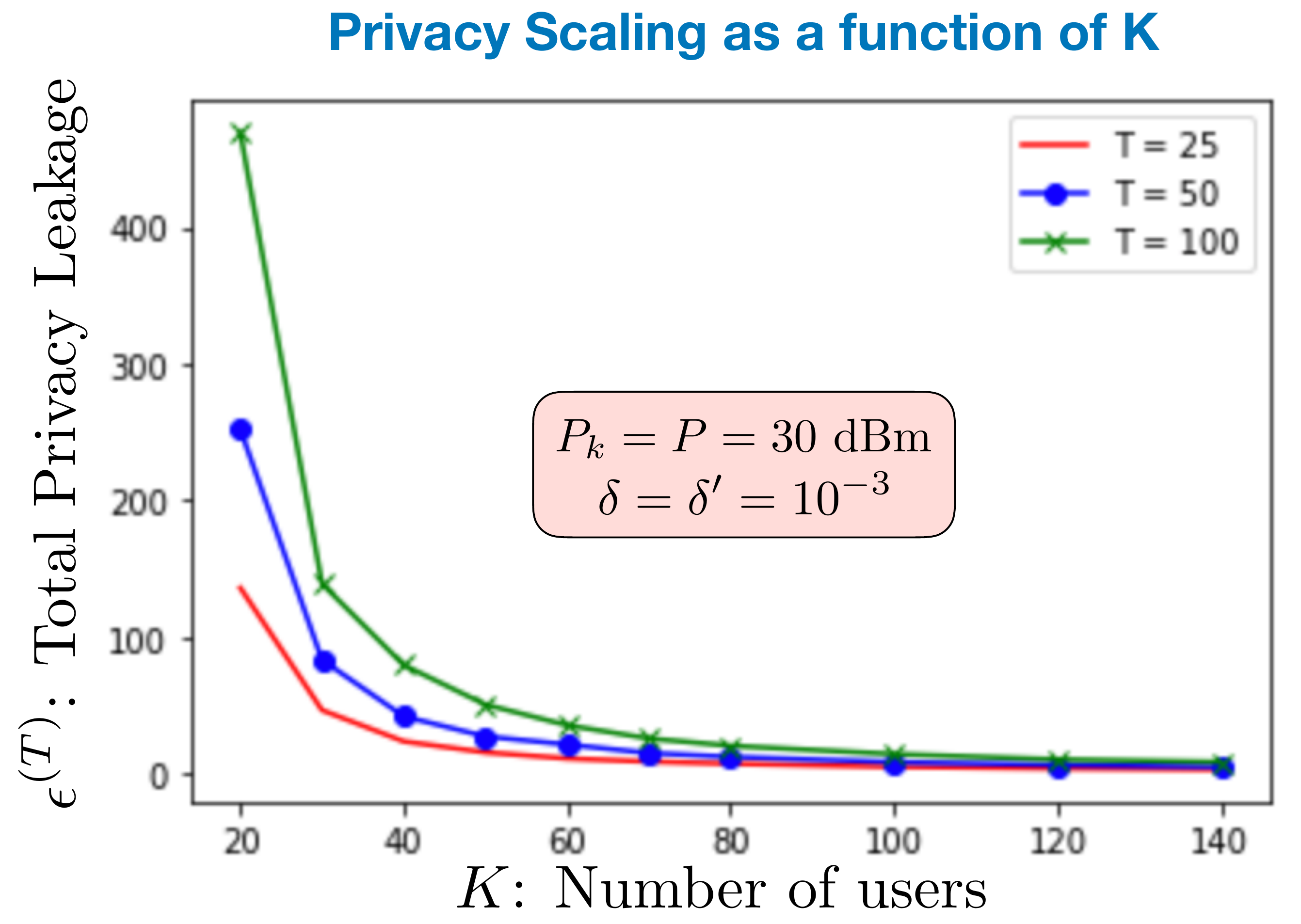}
    \vspace{-5pt}
    \caption{Total per-user privacy leakage as a function of $K$, number of users for different values of $T$, the number of training iterations.}
    \label{fig:privacyscaling}
    \vspace{-10pt}
\end{figure}

\subsection{Convergence rate of private FL}
We next analyze the performance of  private wireless FL under the assumption that the global loss function $F(\mathbf{w})$ is smooth and strongly convex. Due to privacy requirements and noisy nature of wireless channel, the convergence rate is penalized  as shown in the following Theorem. 
\begin{theorem}\label{theorem_1_convergence}
Suppose the loss function $F$ is $\lambda$-strongly convex and $\mu$-smooth with respect to $\mathbf{w}^{*}$. Then, for a learning rate $\eta_{t} = 1 / \lambda t$ and a number of iterations $T$, the convergence rate of the private wireless FL algorithm is 
\begin{align}
    & \mathds{E} \left[ F(\mathbf{w}_{T}) \right] - F(\mathbf{w}^{*}) 
     \leq  \frac{2 \mu}{ \lambda^{2} T} \times  \left[ L^{2} +
 \frac{d}{K^{2} c^{2}} \left[ \sum_{k=1}^{K} |h_{k}|^{2} \beta_{k} P_{k} + \sigma_{m}^{2} \right] \right]. \label{convergence_equation}
\end{align}
\end{theorem}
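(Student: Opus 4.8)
The plan is to apply the standard SGD-for-strongly-convex-functions analysis, treating the post-processed aggregate $\hat{\mathbf{g}}_t = \nabla F(\mathbf{w}_t) + \mathbf{z}_t$ from \eqref{eq:postpreprocessing} as an unbiased stochastic gradient with bounded second moment. First I would record the two facts we need about $\hat{\mathbf{g}}_t$: it is unbiased, $\mathds{E}[\hat{\mathbf{g}}_t \mid \mathbf{w}_t] = \nabla F(\mathbf{w}_t)$, and its second moment splits as $\mathds{E}[\|\hat{\mathbf{g}}_t\|_2^2 \mid \mathbf{w}_t] = \|\nabla F(\mathbf{w}_t)\|_2^2 + d\sigma_z^2$, where $\sigma_z^2 = \frac{1}{K^2c^2}\big[\sum_k |h_k|^2 \beta_k P_k + \sigma_m^2\big]$. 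Using the gradient-norm bound $\|\nabla F(\mathbf{w}_t)\|_2 \le L$ (inherited from $\|\mathbf{g}_k(\mathbf{w}_t)\|_2 \le L$ and convexity of the norm), we get the uniform bound $\mathds{E}[\|\hat{\mathbf{g}}_t\|_2^2] \le L^2 + d\sigma_z^2 =: G^2$, which is exactly the bracketed quantity in \eqref{convergence_equation}.

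Next I would run the one-step descent recursion. Starting from $\mathbf{w}_{t+1} = \mathbf{w}_t - \eta_t \hat{\mathbf{g}}_t$, expand $\|\mathbf{w}_{t+1} - \mathbf{w}^*\|_2^2$, take conditional expectation, and use unbiasedness to replace the cross term by $-2\eta_t \langle \nabla F(\mathbf{w}_t), \mathbf{w}_t - \mathbf{w}^*\rangle$. Strong convexity gives $\langle \nabla F(\mathbf{w}_t), \mathbf{w}_t - \mathbf{w}^*\rangle \ge F(\mathbf{w}_t) - F(\mathbf{w}^*) + \frac{\lambda}{2}\|\mathbf{w}_t - \mathbf{w}^*\|_2^2$. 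Combining these with the second-moment bound yields
\begin{align}
\mathds{E}\big[\|\mathbf{w}_{t+1} - \mathbf{w}^*\|_2^2\big] \le (1 - \eta_t \lambda)\,\mathds{E}\big[\|\mathbf{w}_t - \mathbf{w}^*\|_2^2\big] - 2\eta_t\,\mathds{E}\big[F(\mathbf{w}_t) - F(\mathbf{w}^*)\big] + \eta_t^2 G^2. \nonumber
\end{align}
Plugging in $\eta_t = 1/(\lambda t)$, rearranging to isolate $\mathds{E}[F(\mathbf{w}_t) - F(\mathbf{w}^*)]$, and telescoping the sum $\sum_{t=1}^{T}$ with the weights induced by this choice is the classical Rakhlin–Shamir–Sridharan / Nemirovski-type argument; it produces a bound on a (weighted) average iterate of order $\frac{G^2}{\lambda T}$. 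Finally, to convert the averaged-iterate guarantee into the stated guarantee on $F(\mathbf{w}_T)$, I would invoke $\mu$-smoothness, $F(\mathbf{w}_T) - F(\mathbf{w}^*) \le \frac{\mu}{2}\|\mathbf{w}_T - \mathbf{w}^*\|_2^2$, together with the recursion above applied to bound $\mathds{E}[\|\mathbf{w}_T - \mathbf{w}^*\|_2^2] \le \frac{4 G^2}{\lambda^2 T}$ (the standard $O(1/(\lambda^2 t))$ contraction for $\eta_t = 1/(\lambda t)$), which gives exactly $\frac{2\mu}{\lambda^2 T}\big[L^2 + d\sigma_z^2 K^2 c^2 / (K^2 c^2)\big]$ after substituting $G^2$ and $\sigma_z^2$.

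The main obstacle is bookkeeping rather than conceptual: getting the telescoping/induction with the time-varying step size $\eta_t = 1/(\lambda t)$ to yield precisely the constant $2$ (and the clean $1/T$ rate with no residual additive term) requires care — one must either prove by induction that $\mathds{E}[\|\mathbf{w}_t - \mathbf{w}^*\|_2^2] \le \frac{4G^2}{\lambda^2 t}$ for all $t$, checking the base case and the inductive step $(1 - \frac{1}{t})\frac{4G^2}{\lambda^2 t} + \frac{G^2}{\lambda^2 t^2} \le \frac{4G^2}{\lambda^2 (t+1)}$, or use a weighted averaging scheme. A secondary point worth stating explicitly is that the effective-noise term $\mathbf{z}_t$ is independent across iterations and of $\mathbf{w}_t$ (it is freshly generated artificial noise plus channel noise each round), which is what licenses treating $d\sigma_z^2$ as a fixed additive variance inside the second-moment bound. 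Everything else is routine substitution of $c$ and $\sigma_z^2$ from the transmission-scheme setup.
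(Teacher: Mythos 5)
Your proposal is correct and follows essentially the same route as the paper: both arguments reduce the theorem to verifying that $\hat{\mathbf{g}}_t$ is an unbiased gradient estimate with second moment bounded by $G^2 = L^2 + \frac{d}{K^2c^2}\bigl[\sum_{k}|h_k|^2\beta_kP_k+\sigma_m^2\bigr]$, and then apply the standard last-iterate rate $\mathds{E}[F(\mathbf{w}_T)]-F(\mathbf{w}^*)\le \frac{2\mu G^2}{\lambda^2 T}$ for $\lambda$-strongly convex, $\mu$-smooth objectives with step size $\eta_t = 1/\lambda t$. The only difference is that the paper invokes this rate as a black box (citing Rakhlin et al.), whereas you sketch its internal recursion and induction; that is a harmless elaboration rather than a different approach.
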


Theorem \ref{theorem_1_convergence} is proved in Appendix I.   
We next show that artificial noise parameters $\{\beta_k\}_{k=1}^{K}$ can be optimized to maximize the convergence rate in (\ref{convergence_equation})  while satisfying a desired privacy level $(\epsilon_{k}, \delta)$-LDP at each user. 

\begin{theorem}\label{optimized_convergence}
The optimized convergence rate of the private wireless FL algorithm is given as follows:
\begin{align}
    & \mathds{E} \left[ F(\mathbf{w}_{T}) \right] - F(\mathbf{w}^{*})  
     \leq   \frac{2 \mu }{ \lambda^{2} T} \times  \left[ L^{2} +
 \frac{d}{K^{2} c^{2}} \left[ \sum_{k=1}^{K} Z_{k} + \sigma_{m}^{2} \right] \right], \label{convergence_equation_2}
\end{align}
where $  Z_{k}  = \min \left[ \lambda_{k}, (\Psi - \sum_{i = 1}^{k-1} U_{i})^{+} \right]$ where $(a)^{+} \triangleq \max (0, a) $, $\lambda_{k} = |h_{k}|^{2} P_{k} (1-\alpha_{k})$, \\
$ \Psi = \max_{i} \frac{8  \min_{j} |h_{j}|^{2} P_{j} }{\epsilon_{i}^{2}} \log \frac{1.25}{\delta} - \sigma_{m}^{2}$, and $U_{i} = |h_{i}|^{2} P_{i} \beta_{i} $. 
\end{theorem}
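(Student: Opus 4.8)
The plan is to recast the choice of the artificial-noise powers $\{\beta_k\}_{k=1}^{K}$ as a small linear program and solve it in closed form. First I would note that the right-hand side of \eqref{convergence_equation} in Theorem \ref{theorem_1_convergence} is a strictly increasing affine function of the single scalar $S \triangleq \sum_{k=1}^{K} |h_k|^2 \beta_k P_k$, with $\mu,\lambda,T,L,d,K,c,\sigma_m^2$ all fixed. Hence maximizing the convergence rate is exactly equivalent to minimizing $S$ over all power splits that are admissible and private enough, and the whole proof reduces to solving that minimization.

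Next I would translate the constraints. The power budget $\alpha_k+\beta_k\le 1$ forces $\beta_k\in[0,1-\alpha_k]$, i.e. the per-user quantity $U_k \triangleq |h_k|^2 P_k\beta_k$ satisfies $0\le U_k\le \lambda_k \triangleq |h_k|^2 P_k(1-\alpha_k)$, with $S=\sum_k U_k$. For privacy, Theorem \ref{theorem1:privacy} gives the per-iteration leakage of user $k$ as $\epsilon_k^{\mathrm{ach}} = 2\sqrt{\min_j|h_j|^2P_j}\,\sqrt{2\log(1.25/\delta)}\,/\,\sqrt{S+\sigma_m^2}$, which depends on $\{\beta_k\}$ only through the common aggregate $S$. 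Imposing the target $\epsilon_k^{\mathrm{ach}}\le\epsilon_k$ for every $k$, squaring, and rearranging yields $S \ge \tfrac{8\min_j|h_j|^2P_j}{\epsilon_k^2}\log\tfrac{1.25}{\delta}-\sigma_m^2$ for all $k$, and taking the worst (largest) right-hand side over $k$ gives the single scalar constraint $S\ge\Psi$ with $\Psi$ exactly as in the statement.

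The problem is therefore: minimize $\sum_k U_k$ subject to $\sum_k U_k\ge\Psi$ and $0\le U_k\le\lambda_k$. This is an elementary LP whose optimal value is $\min(\Psi^{+},\sum_k\lambda_k)$, and I would exhibit the greedy ``fill users in order'' allocation $U_k=Z_k=\min[\lambda_k,(\Psi-\sum_{i<k}U_i)^{+}]$ as an explicit optimal feasible point: it respects every box constraint, it stops adding noise as soon as the running total reaches $\Psi$, and any feasible allocation necessarily has $\sum_k U_k$ at least this large. The two degenerate regimes are both captured by the same formula: when $\Psi\le0$ no artificial noise is needed and $Z_k=0$ for all $k$, while when $\Psi>\sum_k\lambda_k$ the target privacy is infeasible and $Z_k=\lambda_k$, in which case the bound still holds with $\sum_k Z_k=\sum_k\lambda_k$. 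Substituting $\sum_k|h_k|^2\beta_k P_k=\sum_k Z_k$ into \eqref{convergence_equation} produces \eqref{convergence_equation_2}.

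The step I would be most careful about is not a calculation but the bookkeeping around optimality of the greedy allocation and the direction of the privacy inequality: more artificial noise improves privacy but slows convergence, so the optimum sits at the \emph{smallest} $S$ that still meets every user's budget, and one must check that a single scalar constraint $S\ge\Psi$ genuinely encodes all $K$ per-user LDP requirements simultaneously (which it does, because the achieved leakage in Theorem \ref{theorem1:privacy} is the same for all users). Verifying the greedy solution across all three regimes of $\Psi$ — negative, interior, and exceeding $\sum_k\lambda_k$ — is the only place where a small case analysis is required.
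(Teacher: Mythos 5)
Your proposal is correct and follows essentially the same route as the paper: recast the problem as the linear program of minimizing $\sum_k |h_k|^2\beta_k P_k$ subject to the box constraints $0\le\beta_k\le 1-\alpha_k$ and the aggregate privacy constraint $S\ge\Psi$, then exhibit the greedy allocation $Z_k$ as the optimizer. Your treatment is in fact slightly more careful than the paper's, since you explicitly handle the degenerate regimes $\Psi\le 0$ and $\Psi>\sum_k\lambda_k$ and justify why the $K$ per-user LDP constraints collapse to a single scalar constraint, whereas the paper only states the feasibility condition and notes the ascending ordering of the $\lambda_k$ used in its greedy fill.
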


\begin{proof}
Maximizing the convergence rate in (\ref{convergence_equation}) is equivalent to minimizing the term that depends on $\{\beta_{k}\}_{k=1}^{K}$. Therefore, we solve the following optimization problem:
\begin{align}
      &\min_{\{\beta_{i}\}_{k=1}^{K} }  \sum_{k=1}^{K} |h_{k}|^{2} \beta_{k} P_{k} ~~ \text{such that~~} 0\leq \beta_{k} \leq 1-\alpha_{k}, \forall k, \nonumber \\ 
    & ~~\&~~  \sum_{k=1}^{K}  |h_{k}|^{2} \beta_{k} P_{k} \geq \frac{8  \min_{j} |h_{j}|^{2}  P_{j}}{\epsilon_{k}^{2}} \log \frac{1.25}{\delta } - \sigma_{m}^{2}.  \nonumber  \label{lp_problem}
\end{align}
For given target privacy levels $\{\epsilon_{k}\}_{k=1}^{K}$, this is feasible when
\begin{align}
    \sum_{k = 1}^{K} \underbrace{|h_{k}|^{2} P_{k} (1 - \alpha_{k})}_{\lambda_{k}} \geq {\max_{i} \frac{8  \min_{j} |h_{j}|^{2} P_{j} }{\epsilon_{i}^{2}} \log \frac{1.25}{\delta} - \sigma_{m}^{2}}. \nonumber
\end{align}
 We design $\beta_{k}, \forall k$ as follows:
 \begin{align}
    \beta_{k} = \frac{Z_{k}}{|h_{k}|^{2} P_{k}}, k = 1, \cdots, K.
\end{align}
where 
$Z_{k} = \min \left[ \lambda_{k}, (\Psi - \sum_{i = 1}^{k-1} U_{i})^{+} \right], k = 1, \cdots, K$, $\Psi = \max_{i} \frac{8  \min_{j} |h_{j}|^{2} P_{j} }{\epsilon_{i}^{2}} \log \frac{1.25}{\delta} - \sigma_{m}^{2}$, and $U_{i} = |h_{i}|^{2} \beta_{i} P_{i}$. 
As seen in Fig. \ref{solution_optimization}, we first rank the left-over powers from the users after aligning the gradients, i.e., $\{\lambda_{k}\}_{k=1}^{K} $ in an ascending order. We then allocate the powers $Z_{k}$ such that a subset of users $S$ satisfies $\sum_{k=1}^{S} Z_{k} \geq \psi, S \leq K$, to satisfy privacy constraints. This completes the proof of Theorem \ref{optimized_convergence}.
\end{proof}

\begin{figure}[t]
    \centering
    \includegraphics[width=0.55\linewidth]{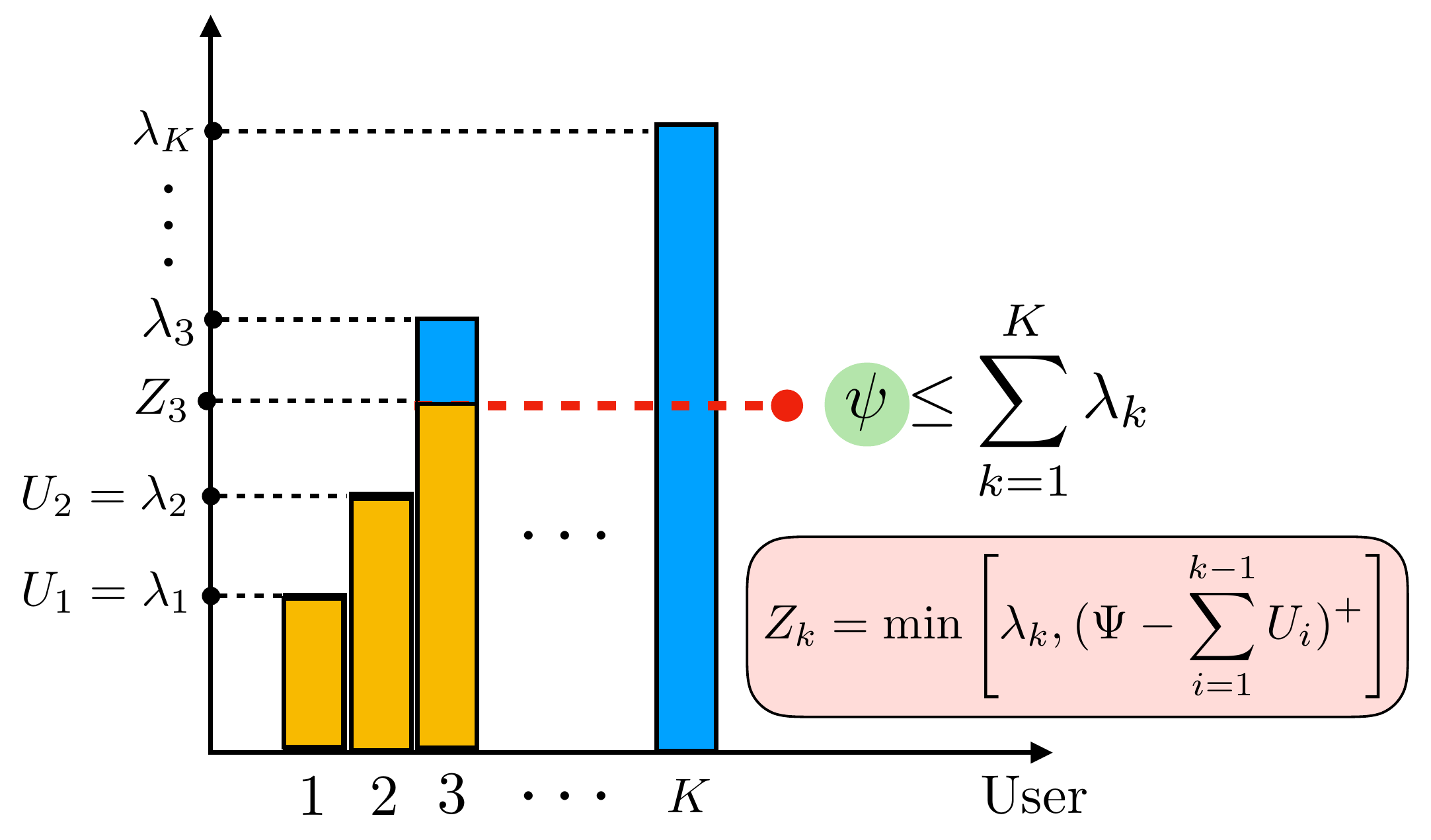}
\vspace{-5pt}
    \caption{An example for the iterative solution: $Z_{1} + Z_{2} + Z_{3} \geq \Psi$, $Z_{k} = 0, k = 4, \cdots, K$. }
    \label{solution_optimization}
    \vspace{-10pt}
\end{figure}

\vspace{-2pt}
\section{Simulation Results} \label{experiments}
\vspace{-1pt}

In this Section, we provide some simulation results to assess the performance of private wireless FL model. We consider a linear regression task on a synthetic dataset. The regularized loss function at the $k$th user is given as: 
\begin{align}
    f_{k}(\mathbf{w}) & = \frac{1}{|\mathcal{D}_k|} \sum_{i=1}^{|\mathcal{D}_k|}(\mathbf{w}^{T} \mathbf{u}_{i}^{(k)} - v_{i}^{(k)})^{2} + \frac{\lambda}{2} \| \mathbf{w} \|^{2}.
\end{align}
Our synthetic dataset consists of 3000  i.i.d. samples drawn from $\mathcal{N}(0, \mathbf{I}_{d+1})$, where  $\mathbf{u}_{i}^{(k)} \in \mathds{R}^{d}$, $v_{i}^{(k)} \in \mathds{R}$ and $d=30$.
We assume that each user has  $|\mathcal{D}_k|=20$ data points.  For the GD algorithm, the regularization parameter $\lambda$ is $10^{-3}$ and $T = 1000$ training iterations. The channel coefficients are drawn from $\mathcal{CN}(0, 1)$, and the channel noise variance is set to $\sigma_{m}^{2} =1$. Also, we assume that each user requires the same privacy level $(\epsilon, \delta) = (1.2, 10^{-4})$-LDP.

 In Fig. \ref{fig:privacy_comparison}(a), we show the impact of the number of users on the training loss for $P_k = 30$ dBm for all $k$.  As we increase the number of users, the training loss decays faster with $T$. In Fig. \ref{fig:privacy_comparison}(b), we compare with the private orthogonal scheme for $KT_{2} = T_{1} = T $ iterations and  $P_k = 30$ dBm for all $k$. Interestingly, the non-orthogonal scheme is more efficient in terms of the bandwidth and accuracy.  In Fig. \ref{fig:privacy_comparison}(c), we show the impact of the transmit power on the training loss where the error decays faster with $T$ as we increase the transmit power.

\begin{figure*}[t]
    \centering
    \includegraphics[width=6in, height=2in]{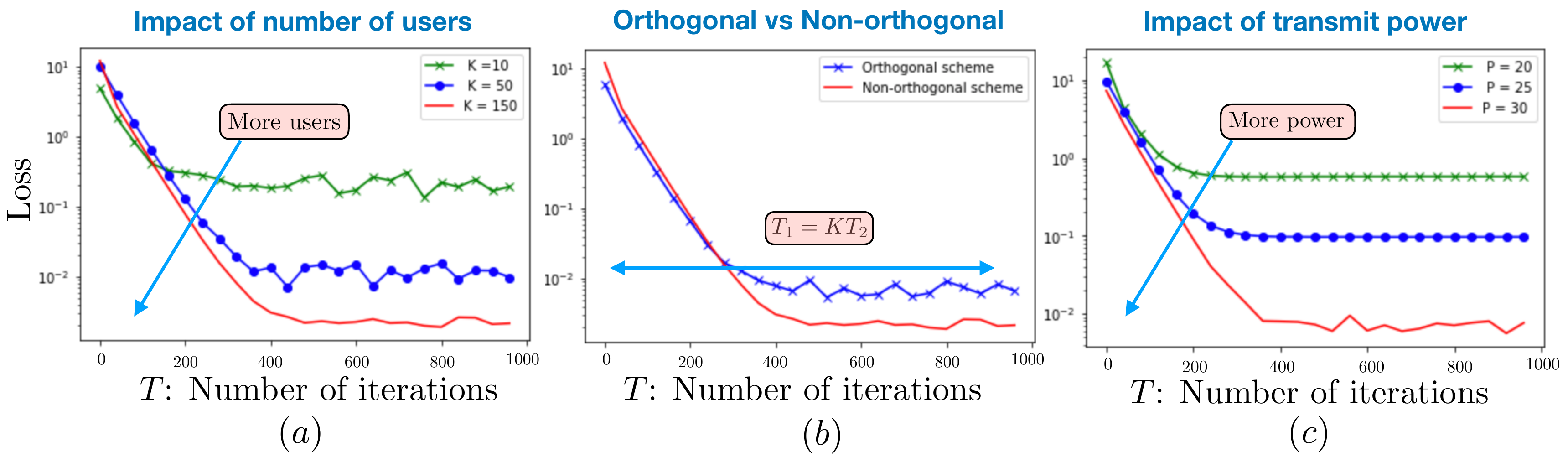}
    \vspace{-5pt}
    \caption{Impact of a) number of users, b) orthogonal vs non-orthogonal transmission, and c) transmit power, on the training loss as a function of  iterations. As we see from the figures, as $T$ increases, the variance term due to the local privacy perturbation and the noisy channel becomes dominant.}
    \label{fig:privacy_comparison}
    \vspace{-5pt}
\end{figure*}

\vspace{-2pt}
\section{Conclusion \& Future Directions}
\label{conclusion}
\vspace{-2pt}
We studied the problem of wireless federated learning   subject to local differential privacy (LDP) constraints. We showed that the wireless channel provides a dual benefit of bandwidth efficiency together with strong LDP guarantees. Using the proposed wireless aggregation scheme, privacy leakage was shown to scale as $\mathcal{O}\big(\frac{1}{\sqrt{K}} \big)$ compared to orthogonal transmission in which the privacy leakage scales as a constant. We also analyzed and optimized the convergence rate of the proposed private FL training algorithm and studied the tradeoffs between wireless resources, convergence, and privacy.

There are several interesting directions for future work, such as generalization to multiple-antennas at the users and the PS. In the proposed scheme, all users align their gradients, which limits the effective SNR by a user with the worst channel conditions. A possible direction would be to explore generalizations of this scheme, by  selecting and aligning gradients from a smaller subsets of users.

\vspace{-2pt}
\section*{Appendix I: Proof of Theorem  \ref{theorem_1_convergence}}\label{proof_theorem_1}
\vspace{-2pt}
To prove  the convergence rate of the proposed algorithm, we recall that the gradient estimate at the PS in (\ref{eq:postpreprocessing}) satisfies: (a) Unbiasedness, i.e., $\mathds{E} \left[\hat{\mathbf{g}}_{t} \right] = \mathds{E} \left[ \nabla F(\mathbf{w}_{t}) \right]$, since the total additive noise is zero mean; and (b) Bounded second moment,  $\mathds{E}\left[\|\hat{\mathbf{g}}_{t}\|_{2}^{2}\right] \leq G^{2}$, which we prove as follows:
\begin{align}
     \mathds{E}\left[\|\hat{\mathbf{g}}_{t}\|_{2}^{2}\right]  & = \mathds{E}\left[\|\nabla F(\mathbf{w}_{t}) + \mathbf{z}_{t}\|_{2}^{2}\right] \nonumber \\ 
    &  =  \mathds{E}\left[\|\nabla F(\mathbf{w}_{t})\|_{2}^{2}\right] + 2 { \mathds{E}\left[\nabla F(\mathbf{w}_{t})^{T} \mathbf{z}_{t}\right] } +  \mathds{E}\left[\|\mathbf{z}_{t}\|_{2}^{2}\right] \nonumber \\ 
    & \overset{(a)} = \|\nabla F(\mathbf{w}_{t})\|_{2}^{2} +  \mathds{E}\left[\|\mathbf{z}_{t}\|_{2}^{2}\right]   \nonumber \\
     & \overset{(b)} \leq   \frac{1}{K^{2}} \times   \bigg( \sum_{k=1}^{K} \| \mathbf{g}_{k}(\mathbf{w}_{t})\|_{2}\bigg)^{2} +  \mathds{E}\left[\|\mathbf{z}_{t}\|_{2}^{2}\right]    \nonumber \\
      & \overset{(c)}  \leq   \frac{1}{K^{2}} \times  ( K L)^{2} +  \frac{d}{K^{2} c^{2}} \ \left[ \sum_{k=1}^{K} |h_{k}|^{2} \beta_{k} P_{k} +1 \right]   \nonumber \\
    & \leq {L^{2}}   + \frac{d}{K^{2} c^{2}} \left[ \sum_{k=1}^{K} |h_{k}|^{2} \beta_{k} P_{k} +1\right] \triangleq G^{2} \label{eq:finalconv}
\end{align}
where (a) follows from the fact that ${ \mathds{E}\left[\nabla F(\mathbf{w}_{t})^{T} \mathbf{z}_{t}\right] = 0 } $,  (b) follows from Cauchy-Schwarz inequality, and (c) from the assumption that $ \| \mathbf{g}_{k}(\mathbf{w}_{t})\|_{2} \leq L$, i.e., the Lipschitz constant $\forall k$. We next invoke standard results \cite{rakhlin2012making} on convergence of SGD for $\mu$-smooth and $\lambda$-strongly convex loss, which states 
\begin{align}
     \mathds{E} \left[ F(\mathbf{w}_{T}) \right] - F(\mathbf{w}^{*}) \leq \frac{2 \mu G^{2}}{\lambda^{2} T}.\label{eq:finalconv2}
\end{align}
Plugging $G^{2}$ from \eqref{eq:finalconv} in \eqref{eq:finalconv2}, we arrive at Theorem \ref{theorem_1_convergence}.

 \bibliographystyle{IEEEtran}
\bibliography{myreferences}

\end{document}